\definecolor{penndarkestblue}{cmyk}{1,0.74,0,0.77}
\definecolor{penndarkerblue}{cmyk}{1,0.74,0,0.70}
\definecolor{pennblue}{cmyk}{0.99,0.66,0,0.57} 
\definecolor{pennlighterblue}{cmyk}{0.98,0.44,0,0.35}
\definecolor{pennlightestblue}{cmyk}{0.38,0.17,0,0.17} 
\definecolor{penndarkestred}{cmyk}{0,1,0.89,0.66}
\definecolor{penndarkerred}{cmyk}{0,1,0.88,0.55}
\definecolor{pennred}{cmyk}{0,1,0.83,0.42} 
\definecolor{pennlighterred}{cmyk}{0,1,0.6,0.24}
\definecolor{pennlightestred}{cmyk}{0,0.43,0.26,0.12} 
\definecolor{penndarkestgreen}{cmyk}{1,0,1,0.68}
\definecolor{penndarkergreen}{cmyk}{1,0,1,0.57}
\definecolor{penngreen}{cmyk}{1,0,1,0.44} 
\definecolor{pennlightergreen}{cmyk}{1,0,1,0.25}
\definecolor{pennlightestgreen}{cmyk}{0.43,0,0.43,0.13}
\definecolor{penndarkestorange}{cmyk}{0,0.65,1,0.49}
\definecolor{penndarkerorange}{cmyk}{0,0.65,1,0.33}
\definecolor{pennorange}{cmyk}{0,0.54,1,0.24} 
\definecolor{pennlighterorange}{cmyk}{0,0.32,1,0.13}
\definecolor{pennlightestorange}{cmyk}{0,0.15,0.46,0.06}
\definecolor{penndarkestpurple}{cmyk}{0,1,0.11,0.86}
\definecolor{penndarkerpurple}{cmyk}{0,1,0.13,0.82}
\definecolor{pennpurple}{cmyk}{0,1,0.11,0.71} 
\definecolor{pennlighterpurple}{cmyk}{0,1,0.05,0.46}
\definecolor{pennlightestpurple}{cmyk}{0,0.35,0.02,0.23}
\definecolor{pennyellow}{cmyk}{0,0.20,1,0.05} 
\definecolor{pennlightgray1}{cmyk}{0,0,0,0.05}
\definecolor{pennlightgray3}{cmyk}{0.01,0.01,0,0.18}
\definecolor{pennmediumgray1}{cmyk}{0.04,0.03,0,0.31}
\definecolor{pennmediumgray4}{cmyk}{0.08,0.06,0,0.54}
\definecolor{penndarkgray2}{cmyk}{0.09,0.07,0,0.71}
\definecolor{penndarkgray4}{cmyk}{0.1,0.1,0,0.92}
\def\SO3{\mathrm{SO(3)}}
\newtheorem{assumption}{\hspace{0pt}\bf Assumption}
\newtheorem{proposition}{\hspace{0pt}\bf Proposition}
\newtheorem{theorem}{\hspace{0pt}\bf Theorem}
\begin{document}

\title{Learning Stable Graph Neural Networks \\ via Spectral Regularization}

\author{\IEEEauthorblockN{Zhan Gao$^{\dagger}$ and Elvin Isufi$^{\ddagger}$\\
\thanks{$^{\dagger}$Department of Computer Science and Technology, University of Cambridge, Cambridge, UK. $^{\ddagger}$Department of Intelligent Systems, Delft University of Technology, Delft, The Netherlands. The work of E. Isufi is supported by the TU Delft AI Labs programme. 
	Email: zg292@cam.ac.uk, e.isufi-1@tudelft.nl }}
}


\maketitle

\begin{abstract}
	
Stability of graph neural networks (GNNs) characterizes how GNNs react to graph perturbations and provides guarantees for architecture performance in noisy scenarios. This paper develops a self-regularized graph neural network (SR-GNN) solution that improves the architecture stability by regularizing the filter frequency responses in the graph spectral domain. The SR-GNN considers not only the graph signal as input but also the eigenvectors of the underlying graph, where the signal is processed to generate task-relevant features and the eigenvectors to characterize the frequency responses at each layer. We train the SR-GNN by minimizing the cost function and regularizing the maximal frequency response close to one. The former improves the architecture performance, while the latter tightens the perturbation stability and alleviates the information loss through multi-layer propagation. We further show the SR-GNN preserves the permutation equivariance, which allows to explore the internal symmetries of graph signals and to exhibit transference on similar graph structures. Numerical results with source localization and movie recommendation corroborate our findings and show the SR-GNN yields a comparable performance with the vanilla GNN on the unperturbed graph but improves substantially the stability.

\end{abstract}

\begin{IEEEkeywords}
Graph neural networks, graph spectrum, perturbation stability, permutation equivariance
\end{IEEEkeywords}

\IEEEpeerreviewmaketitle

\section{Introduction} \label{sec:intro}

Data generated by networks, such as sensor, social and multi-agent systems, exhibits an irregular structure inherent in its underlying topology \cite{marsden1990network, aggarwal2011introduction}. 
Graph neural networks (GNNs) leverage this topological information to model task-relevant representations from networked data \cite{Wu2019, gama2020graphs, gao2021variance, isufi2021edgenets} with applications in wireless communications \cite{gao2020resource, gao2022decentralized}, recommender systems \cite{gao2022graph, isufi2021accuracy} and robot swarm control \cite{tolstaya2020learning, gao2022wide}. The GNN leverages the underlying graph as an inductive bias to learn representations and also as a platform to process information distributively. However, such a graph is often perturbed by external factors such as channel fading, adversarial attacks, or estimation errors \cite{xu2004wireless, Gungor2010}. In these cases, GNNs will be trained on the underlying graph but implemented on the perturbed one, resulting in a degraded performance.

Characterizing the GNN stability under graph perturbations has been investigated in \cite{levie2019transferability, gama2020stability, gao2021stability, kenlay2021stability}. In particular, \cite{levie2019transferability} showed that GNNs model similar representations on graphs that describe the same phenomenon. The work in \cite{gama2020stability} characterized the GNN stability to absolute and relative perturbations, which indicates that the GNN can be both stable and discriminative. The authors in \cite{gao2021stability} analyzed the stability under stochastic perturbations and identified the effects of graph stochasticity, filter property and architecture hyperparameters. Moreover, the work in \cite{kenlay2021stability} focused on structural perturbations and established a stability bound that has structural interpretations. These results indicate that GNNs can maintain performance under small perturbations, but suffer from inevitable degradation under large ones. 

To alleviate the performance degradation under support perturbations, the work in \cite{gao2021stochastic} developed stochastic graph neural networks that account for graph stochasticity during training and improves robustness to random link losses, while the authors in \cite{gao2021training} trained robust GNNs by using topological adaptive edge dropping as a data augmentation technique. However, these approaches focus only on perturbations related to random link losses. The authors in \cite{cervino2021training} proposed improving stability by imposing an extra constraint on the loss function that restricts the GNN output deviations induced by graph perturbations, but it requires the perturbation information to characterize output deviations during training that may not always be available. Authors in \cite{arghal2021robust} followed a similar training procedure but constrained the filter Lipschitz constant with the scenario approach. While improving the stability, deploying such approaches requires a good intuition on the selection of the constraint bound. A larger bound has less effects on performance but results in a looser stability, while a lower bound yields a tighter stability but degrades the performance. The latter requires hand-tuning at the outset, which could be sensitive to specific scenarios and time-consuming. 

To overcome these issues, we observe that lower values of the filter frequency response yield more stable GNNs but result in an information loss for data propagation through layers \cite{gama2020stability}. This indicates that frequency responses being close to one would be a favorable trade-off because it does not explode the stability bound and maintains the signal information. We leverage this observation to improve stability without requiring any additional prerequisite knowledge, e.g., the perturbation information and the constraint bound selection. Most similar to our approach is \cite{gama2020stability1}, which trains the GNN by regularizing the frequency response directly without considering the information propagation. The latter may lead to an information loss and degrade the performance.

This paper develops the self-regularized graph neural network (SR-GNN) architecture that balances stability with performance by regularizing the filter frequency responses close to one (Section \ref{sec:SRGNN}). The SR-GNN takes both the data and the eigenvectors of the underlying graph as inputs. The former is processed to extract task-relevant features, while the latter are leveraged to characterize the frequency responses of each layer. We train the SR-GNN by regularizing the task-relevant cost with a term that pushes the maximal frequency response close to one, pursuing a trade-off between stability and information propagation (Section \ref{subsec:training}). The SR-GNN is permutation equivariant that allows robust transference (Section \ref{subsec:permutationEquivariance}). We corroborate the SR-GNN in experiments on the source localization and the movie recommendation in Section \ref{sec:experiment}, and conclude the paper in Section \ref{sec_conclusions}.


\section{Preliminaries} \label{sec:StoGNN}

Let $\ccalG = \{\ccalV,\ccalE\}$ be an undirected graph with node set $\ccalV = \{1,...,n\}$ and edge set $\ccalE = \{(i,j)\}\subset \ccalV \times \ccalV$. The graph is associated to a shift operator matrix $\bbS \in \mathbb{R}^{n \times n}$ that captures its sparsity with $[\bbS]_{ij} \ne 0$ iff $(i,j) \in \ccalE$ or $i=j$, such as the adjacency matrix $\bbA$ and the graph Laplacian $\bbL$. Define the graph signal as a vector $\bbx = [x_1,...,x_n]^\top \in \mathbb{R}^n$, where the $i$th entry $x_i$ is the signal value assigned to node $i$. Learning tasks with graphs and signals consider devising solutions capable of capturing the interplay between these two.

\smallskip
\noindent \textbf{Graph neural network (GNN).} A GNN is a layered architecture, where each layer comprises a graph filter bank and a pointwise nonlinearity. At layer $\ell$, the inputs are $F$ graph signals $\{\bbx_\ell^{g}\}_{g=1}^F$ generated at layer $(\ell\!-\!1)$. These input signals are processed by graph filters $\bbH_\ell^{fg}(\bbS)$, which are polynomials of the shift operator that aggregate multi-hop neighborhood information, i.e.,
\begin{align}\label{eq:graphFilterOutput}
	\bbu_\ell^{fg} \!=\! \bbH_\ell^{fg}(\bbS) \bbx_{\ell-1}^g \!:=\! \sum_{k=0}^K h_{k\ell}^{fg} \bbS^k \bbx
\end{align}
for all $g,f =1,\ldots,F$, where $\bbh_{\ell}^{fg}=\{h_{k\ell}^{fg}\}_{k=0}^K$ are the filter coefficients. Expression \eqref{eq:graphFilterOutput} indicates that each of $F$ input signals $\bbx_\ell^g$ is processed by a bank of $F$ graph filters to generate $F^2$ output signals $\bbu_\ell^{fg}$. The latter are summed over the input index $g$ and passed through a pointwise nonlinearity to generate $F$ outputs of layer $\ell$ as
\begin{align}\label{eq:GNN}
	\bbx_\ell^f = \sigma\Big( \sum_{g=1}^F \bbu_\ell^{fg} \Big)
\end{align}
for all $f=1\ldots,F$ and $\ell=1,\ldots,L$. We consider a single input $\bbx_0^1 = \bbx$ and a single output $\bbx_L^1$ and write the GNN as a nonlinear map $\bbPhi(\bbx;\bbS,\ccalH):\mathbb{R}^n \to \mathbb{R}^n$ where $\ccalH = \{\bbh_{\ell}^{fg}\}_{\ell f g}$ are the architecture parameters \cite{gama2020graphs}.

%
\begin{figure}
	\centering
	\includegraphics[width=0.9\columnwidth]
	{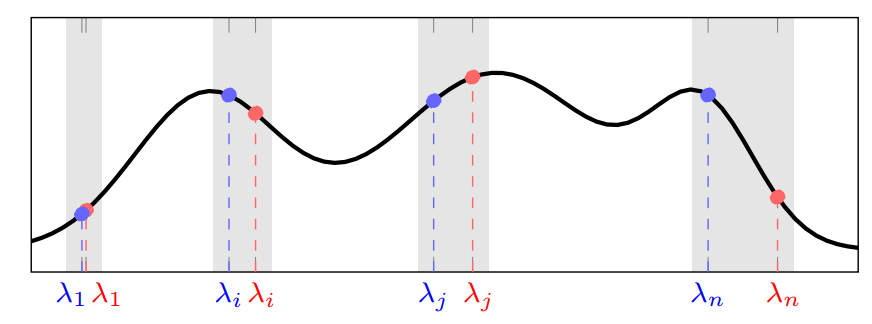}
	\caption{The filter frequency response $h(\lambda)$ (black line). The function $h(\lambda)$ is analytic w.r.t. the generic variable $\lambda$ [cf. \eqref{eq:filterFrequency}]. Specific graphs instantiate specific eigenvalues $\{\lambda_i\}_{i=1}^n$/$\{\lambda_j\}_{j=1}^n$ on $\lambda$ (red/blue points).} \label{fig:frequencyResponse}
\end{figure}
%

\smallskip
\noindent\textbf{Frequency response.} One particularity of GNNs is that their filters admit a spectral equivalence by leveraging concepts from graph signal processing \cite{ortega2018}. Specifically, let $\bbS = \bbV \bbLambda \bbV^{\top}$ be the eigendecomposition with eigenvectors $\bbV = [\bbv_{1}, \cdots, \bbv_{n}]$ and eigenvalues $\bbLambda = \text{diag} (\lambda_{1},...,\lambda_{n})$. By means of the graph Fourier transform (GFT), we can expand the signal as $\bbx = \sum_{i=1}^n \hat{x}_i \bbv_{i}$ with $\hat{\bbx}\!=\![\hat{x}_1, \cdots, \hat{x}_n]^\top$ the Fourier coefficients. Substituting the latter into the filter output \eqref{eq:graphFilterOutput}, we have the input-output relation
\begin{equation} \label{eq:graphFilterSpectralOutput}
	\begin{split}
		\bbu \!=\! \sum_{k=0}^K h_k \bbS^k \sum_{i=1}^n \hat{x}_i \bbv_{i} \!=\! \sum_{i=1}^n \sum_{k=0}^K \hat{x}_i h_k \lambda_{i}^k \bbv_{i} .
	\end{split}
\end{equation}
With the similar expansion of the output $\bbu = \sum_{i=1}^n \hat{u}_i \bbv_i$, the spectral input-output filtering relation is $\hat{\bbu} = \bbH(\bbLambda)\hat{\bbx}$. Here, $\bbH(\bbLambda) = \text{diag}(\sum_{k=0}^K h_k \lambda_1^k,\ldots, \sum_{k=0}^K h_k \lambda_n)$ is a diagonal matrix containing the filter frequency responses. This shows the filter coefficients $\{h_k\}_{k=0}^K$ define the frequency response and the underlying graph $\bbS$ only instantiates the specific eigenvalues $\{\lambda_i\}_{i=1}^n$. We can then characterize the graph filter via the analytic function
\begin{equation} \label{eq:filterFrequency}
	\begin{split}
		h(\lambda) = \sum_{k=0}^K h_k \lambda^k
	\end{split}
\end{equation}
where $\lambda$ is the function variable -- see Fig. \ref{fig:frequencyResponse}.

\section{Analysis and Motivation}\label{sec:Analysis}

%
\begin{figure*}
	\centering
	\includegraphics[width=1.8\columnwidth]
	{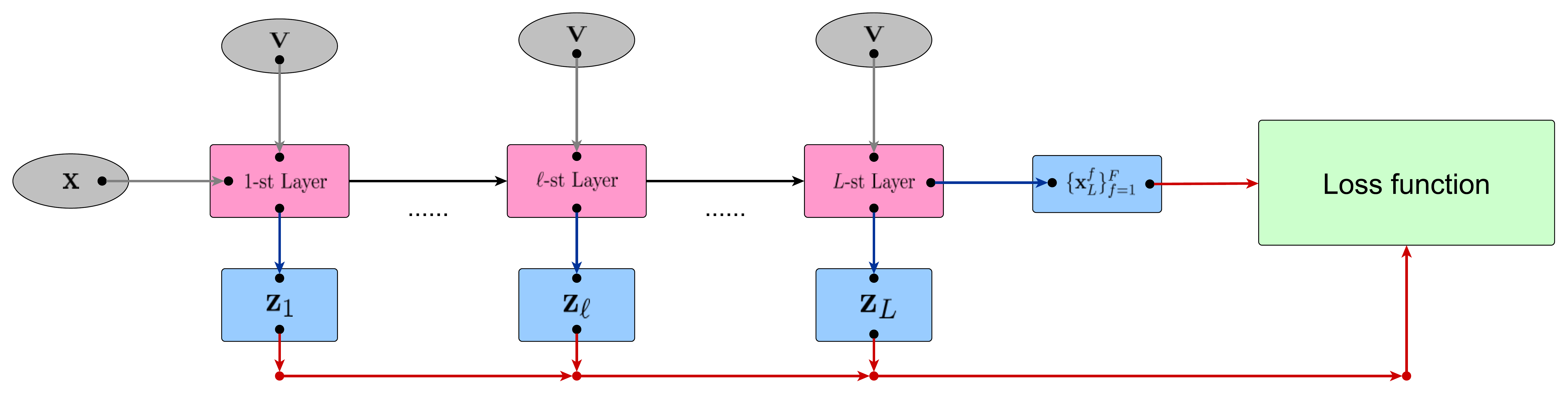}
	\caption{The self-regularized graph neural network (SR-GNN). The inputs are the graph signal $\bbx$ and the eigenvectors $\bbV$ of the underlying graph $\bbS = \bbV\bbLambda\bbV^\top$. The outputs are the task-relevant features $\{\bbx^f_L\}_{f=1}^F$ and the spectral outputs $\{\bbz_\ell\}_{\ell=1}^L$ of $L$ layers. The former is used to compute the task-relevant cost and the latter is used for the spectral regularization, both of which complete the loss function.}%
	\label{fig:SRGNNArchitecture}
\end{figure*}
%

GNNs model representations from graph signals by exploiting the graph topology and thus, may lead to a degraded performance under topological perturbations. The perturbation effects on the GNN performance have been analyzed in \cite{gama2020stability}, which relies on the following assumption.
\begin{assumption}[Bounded Lipschitz filter] \label{as:boundedLispchitzFilter}
	A graph filter with the frequency response $h(\lambda)$ as in \eqref{eq:filterFrequency} is bounded Lipschitz, i.e., there exist constants $C_U$ and $C_L$ such that
	\begin{align}
		|h(\lambda)| \le C_U~\text{and}~ |h'(\lambda)| \le C_L,~\forall~\lambda \in \mathbb{R}.
	\end{align}
\end{assumption}
%
\noindent Assumption \ref{as:boundedLispchitzFilter} indicates that the filter frequency response is bounded and does not change faster than linear in the graph spectral domain. The following theorem then formalizes the stability of GNNs.
\begin{theorem}[Stability of GNNs \cite{gama2020stability}] \label{theorem:GNNstability}
	Consider the GNN $\bbPhi(\bbx; \bbS, \ccalH)$ of $L$ layers and $F$ features with the underlying graph $\bbS$ [cf. \eqref{eq:GNN}]. Let the perturbed graph $\widetilde{\bbS}$ be such that $\|\widetilde{\bbS} - \bbS \| \le \eps$ with $\eps$ the perturbation size. Let also the graph filters satisfy Assumption \ref{as:boundedLispchitzFilter} and the nonlinearity be normalized Lipschitz, i.e., $|\sigma(a) - \sigma(b)| \le |a - b|$ for all $a,b\in \mathbb{R}$. Then, for any graph signal $\bbx$, it holds that
	\begin{align}\label{eq:GNNstability}
		&\| \bbPhi(\bbx;\bbS,\ccalH) - \bbPhi(\bbx;\widetilde{\bbS},\ccalH)\| \le C \| \bbx \| \eps \!+\! \mathcal{O}(\eps^2)
	\end{align}
	where $C = (1+8\sqrt{n}) L C_L (C_U F)^L$ is the stability constant.
\end{theorem}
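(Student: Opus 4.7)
The plan is to prove the theorem in two stages: first establishing stability of a single graph filter against an absolute perturbation of size $\eps$, and then propagating that bound through the $L$-layer, $F$-feature GNN via induction. Writing $\widetilde{\bbS} = \bbS + \bbE$ with $\|\bbE\| \le \eps$, the single-filter goal is to show $\|(\bbH(\bbS) - \bbH(\widetilde{\bbS}))\bbx\| \le (1 + 8\sqrt{n}) C_L \eps \|\bbx\| + \mathcal{O}(\eps^2)$, so that the constant $C$ in \eqref{eq:GNNstability} emerges naturally when a filter bank of size $F^2$ is applied in each of $L$ layers.

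For the single-filter step, I would work in the spectral domain using $\bbS = \bbV \bbLambda \bbV^\top$ and $\widetilde{\bbS} = \widetilde{\bbV}\widetilde{\bbLambda}\widetilde{\bbV}^\top$. Weyl's inequality gives $|\widetilde{\lambda}_i - \lambda_i| \le \eps$, and combining this with Assumption \ref{as:boundedLispchitzFilter} yields $|h(\widetilde{\lambda}_i) - h(\lambda_i)| \le C_L \eps$, which handles the "eigenvalue-aligned" part of the perturbation. The delicate part is the eigenvector misalignment: write $\widetilde{\bbV} = \bbV + \bbE_V$ and decompose $\bbH(\widetilde{\bbS}) - \bbH(\bbS) = \widetilde{\bbV}(\bbH(\widetilde{\bbLambda}) - \bbH(\bbLambda))\widetilde{\bbV}^\top + \bbH(\bbLambda)\text{-cross terms in }\bbE_V$. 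The first group of terms contributes $C_L \eps \|\bbx\|$ using the Lipschitz bound on $h$; the cross terms, bounded via a Davis--Kahan-type argument together with $|h| \le C_U$ and the explicit decomposition in \cite{gama2020stability}, contribute $8\sqrt{n} C_L \eps \|\bbx\|$ to first order, which is where the $(1+8\sqrt{n})$ factor originates. Higher-order terms in $\bbE_V$ and $(\widetilde{\bbLambda} - \bbLambda)$ are absorbed into $\mathcal{O}(\eps^2)$.

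With the per-filter bound in hand, I would prove \eqref{eq:GNNstability} by induction on $L$. Let $\bbx_\ell^f$ and $\widetilde{\bbx}_\ell^f$ denote layer-$\ell$ feature outputs on $\bbS$ and $\widetilde{\bbS}$ respectively, and add/subtract a hybrid $\bbH_\ell^{fg}(\widetilde{\bbS})\bbx_{\ell-1}^g$ inside the nonlinearity. The normalized-Lipschitz assumption on $\sigma$ preserves norms, so by the triangle inequality the layer-$\ell$ deviation splits into (i) perturbed filters acting on the layer-$(\ell-1)$ deviation, bounded by $C_U$ per filter and $F$ per input aggregation, and (ii) the difference of filters on the clean feature $\bbx_{\ell-1}^g$, bounded using the single-filter result. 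Since the unperturbed layer map has operator norm at most $C_U F$, one readily shows by induction that $\|\bbx_{\ell-1}^g\| \le (C_U F)^{\ell-1}\|\bbx\|$; unrolling the recursion adds one $(1+8\sqrt{n})C_L \eps (C_U F)^{\ell - 1}\|\bbx\|$ contribution per layer, and summing over $\ell = 1, \dots, L$ gives the claimed $C = (1+8\sqrt{n}) L C_L (C_U F)^L$, modulo $\mathcal{O}(\eps^2)$ remainders from both the filter expansion and from compounding first-order terms across layers.

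The main obstacle I anticipate is the eigenvector-misalignment argument that produces the $\sqrt{n}$ factor: a naive Davis--Kahan bound degrades when eigenvalues of $\bbS$ are clustered, so the right move is to exploit that the filter response itself varies slowly (Lipschitz with constant $C_L$) so that misaligned contributions from close eigenvalues partially cancel, isolating a first-order-in-$\eps$ bound rather than one that blows up with the reciprocal spectral gap. Carefully tracking this cancellation and verifying that the cross-term constant is exactly $8\sqrt{n}$, together with confirming that all second-order remainders (from both filter and layer compositions) remain uniformly $\mathcal{O}(\eps^2)$ independent of $L$, is the most technical part of the argument.
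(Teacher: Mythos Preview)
The paper does not supply its own proof of this theorem: it is stated as a cited result from \cite{gama2020stability} and used only to motivate the regularization term, so there is no in-paper argument to compare against. Your two-stage plan (single-filter spectral bound followed by layerwise induction exploiting the normalized-Lipschitz nonlinearity) is exactly the approach of the cited reference, and the pieces you identify---Weyl for eigenvalue drift, a Davis--Kahan-type control of eigenvector misalignment tempered by the Lipschitz constant $C_L$ rather than the spectral gap, and the $(C_UF)^{\ell-1}$ growth of the unperturbed features---are the correct ingredients. Your anticipated obstacle is also the right one: the $8\sqrt{n}$ constant does not come from a naive Davis--Kahan bound but from a more careful splitting in \cite{gama2020stability} that groups eigenvectors with nearby eigenvalues and uses $|h(\lambda_i)-h(\lambda_j)|\le C_L|\lambda_i-\lambda_j|$ to absorb what would otherwise be a spectral-gap denominator, so if you write this up you should follow that decomposition rather than invoke Davis--Kahan directly.
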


\noindent Theorem \eqref{theorem:GNNstability} states that the output difference of the GNN is upper bounded proportionally by the size of the perturbation $\eps$ w.r.t. the stability constant $C$. The bound reduces to zero when $\eps \to 0$, i.e., the GNN maintains performance under mild perturbations.

\smallskip
\noindent \textbf{Problem motivation.} GNNs degrade substantially when the perturbation size $\eps$ is large. The latter motivates to improve stability by acting on the architecture to decrease the stability constant $C$ [cf. \eqref{eq:GNNstability}]. This could be achieved in three ways:

\begin{enumerate}[(i)]
	
	\item \emph{Reducing graph size.} The term $(1+8\sqrt{n})$ indicates that a smaller graph leads to a more stable GNN. However, the underlying graph is determined by problem settings and cannot be changed during training.
	
	\item \emph{Reducing architecture width and depth.} The term $LF^L$ is the consequence of the perturbation passing through filter banks $F$ and layers $L$. The GNN becomes more stable with less features or less layers, but the latter degrades its representational capacity. 
	
	\item \emph{Tuning parameters.} The term $C_L C_U^L$ indicates the impact of the frequency response $h(\lambda)$, which are determined by the architecture parameters $\ccalH$. The stability bound decreases with the bound constant $C_U$ and the Lipschitz constant $C_L$. 
	
\end{enumerate}

\noindent The above observations motivate improving the GNN stability by (iii) tuning the architecture parameters $\ccalH$ to reduce the Lipschitz constant $C_L$ or the bound constant $C_U$. Note also that the stability constant $C$ is proportional to $C_L$ and to $C_U^L$. The latter increases faster than the former especially for multi-layer GNNs, e.g., $L \ge 2$. Therefore, we set focus on developing an architecture that restricts the bound constant $C_U$ to improve stability. 

A smaller $C_U$ yields a more stable GNN but it also leads to an information loss passing through the filter; hence, degrading performance. Since each layer output $\bbx_\ell^f$ is aggregated over $F$ intermediate features $\{\bbu_{\ell}^{fg}\}_{g=1}^F$ [cf. \eqref{eq:GNN}], our expectation is to tune the bound constant $C_U$ such that $C_U F$ is close to one. The latter would be a good trade-off between perturbation stability and information propagation, i.e., the term $(C_U F)^L$ in the stability bound \eqref{eq:GNNstability} does not explode with $L$ and the input passes through the layers with a contained information loss. To do so, we develop the self-regularized graph neural network (SR-GNN) that regularizes the task-relevant cost with a term pushing $C_U F$ close to one at each layer. The SR-GNN is self-contained, where no prerequisite knowledge is required, e.g., the perturbation information, and no outside parameter is introduced, e.g., the constraint bound.

\section{Self-Regularized Graph Neural Networks}\label{sec:SRGNN}

We propose an architecture that not only extracts task-relevant features from graph signals but also generates filter frequency responses at each layer. The frequency responses are used to characterize the bound constant $C_U$ and regularize the latter to improve stability. This architecture pursues a trade-off between performance and stability via a spectral regularization.
 
For a specific task, the underlying graph $\bbS$ instantiates the specific eigenvalues $\{\lambda_i\}_{i=1}^n$ on the filter frequency response $h(\lambda)$ [cf. \eqref{eq:filterFrequency}]. Hence, Theorem \ref{theorem:GNNstability} implies that the GNN is stable w.r.t. this graph $\bbS$ as long as the frequency responses at these eigenvalues $\{h(\lambda_i)\}_{i=1}^n$ are bounded Lipschitz, i.e., $|h(\lambda_i)| \le C_U$ for $i=1,...,n$, instead of at the entire graph frequency domain. This indicates that we only need to focus on $\{h(\lambda_i)\}_{i=1}^n$ during training, which can be characterized by leveraging the eigenvectors $\bbV =\{\bbv_i\}_{i=1}^n$ together with the spectral input-output filtering relation \eqref{eq:graphFilterSpectralOutput}.

\smallskip
\noindent\textbf{Self-regularized graph neural network (SR-GNN).} The SR-GNN follows the same layered architecture as the vanilla GNN, which takes $F$ graph signals $\{\bbx_{\ell-1}^{g}\}_{g=1}^F$ and the eigenvectors $\bbV$ of $\bbS$ as inputs at layer $\ell$. Both are processed by the same bank of filters with shared parameters $\{\bbH_\ell^{fg}(\bbS)\}_{fg}$ and aggregated over the input index $g$ as
\begin{align}\label{eq:SRGNNOutput}
	&\bbu_\ell^{f} = \sum_{g=1}^F \bbH_\ell^{fg}(\bbS) \bbx_{\ell-1}^g,\\
	&\bbz_{\ell i}^f = \sum_{g=1}^F \sum_{k=0}^K h_{\ell k}^{fg} \bbS^k \bbv_i = \sum_{g=1}^F h_\ell^{fg}(\lambda_i) \bbv_i
\end{align}
for all $f=1,\ldots,F$. The filtered signals $\{\bbu_\ell^{f}\}_{f=1}^F$ are passed through the pointwise nonlinearity $\sigma(\cdot)$ as
\begin{align}\label{eq:SRGNNOutput2}
	\bbx_{\ell+1}^{f} = \sigma\big(\bbu_\ell^{f}\big),~\forall~ f=1,\ldots,F
\end{align}
which are the same higher-level features as the vanilla GNN and are propagated to the next layer $(\ell+1)$. Instead, the outputs $\{\bbz_{\ell,i}\}_{f=1}^F$ are multiplied by the respective eigenvector $\bbv_i$ to compute $F$ frequency responses of layer $\ell$. These are passed through the maxout nonlinearity as 
\begin{align}\label{eq:layerOutput}
	z_{\ell i} \!=\! \max_{f=1,...,F} \!\left(\!(\bbz_{\ell i}^f)^\top\! \bbv_i\! \right) \!=\! \!\max_{f=1,...,F} \!\left(\sum_{g=1}^F\! h_\ell^{fg}(\lambda_i)\!\right)
\end{align}
for all $i=1,...,n$, which is the maximal sum of the $F$ frequency responses on the eigenvalue $\lambda_i$ and thus, corresponds to $F C_U$ in the stability bound \eqref{eq:GNNstability}. The latter are concatenated to a feature vector $\bbz_\ell = [z_{\ell 1},\ldots,z_{\ell n}]^\top$, referred to as the \emph{spectral output} of layer $\ell$. The SR-GNN is a nonlinear map $\bbPhi(\bbx,\bbV;\bbS,\ccalH)$ with inputs the graph signal $\bbx$ and the eigenvectors $\bbV$ and outputs the task-relevant features $\{\bbx_L^f\}_{f=1}^F$ and the spectral outputs of $L$ layers $\{\bbz_{\ell}\}_{\ell\!=\!1}^L$, i.e., $\bbPhi(\bbx,\!\bbV;\bbS,\!\ccalH) \!=\! \big\{\bbPhi(\bbx;\bbS,\!\ccalH), \{\bbz_{\ell}\}_{\ell=1}^L\big\}$ -- see Fig. \ref{fig:SRGNNArchitecture}.

\begin{figure*}[t]
	\centering
	\begin{subfigure}{.325\textwidth}
		\includegraphics[width=\textwidth]{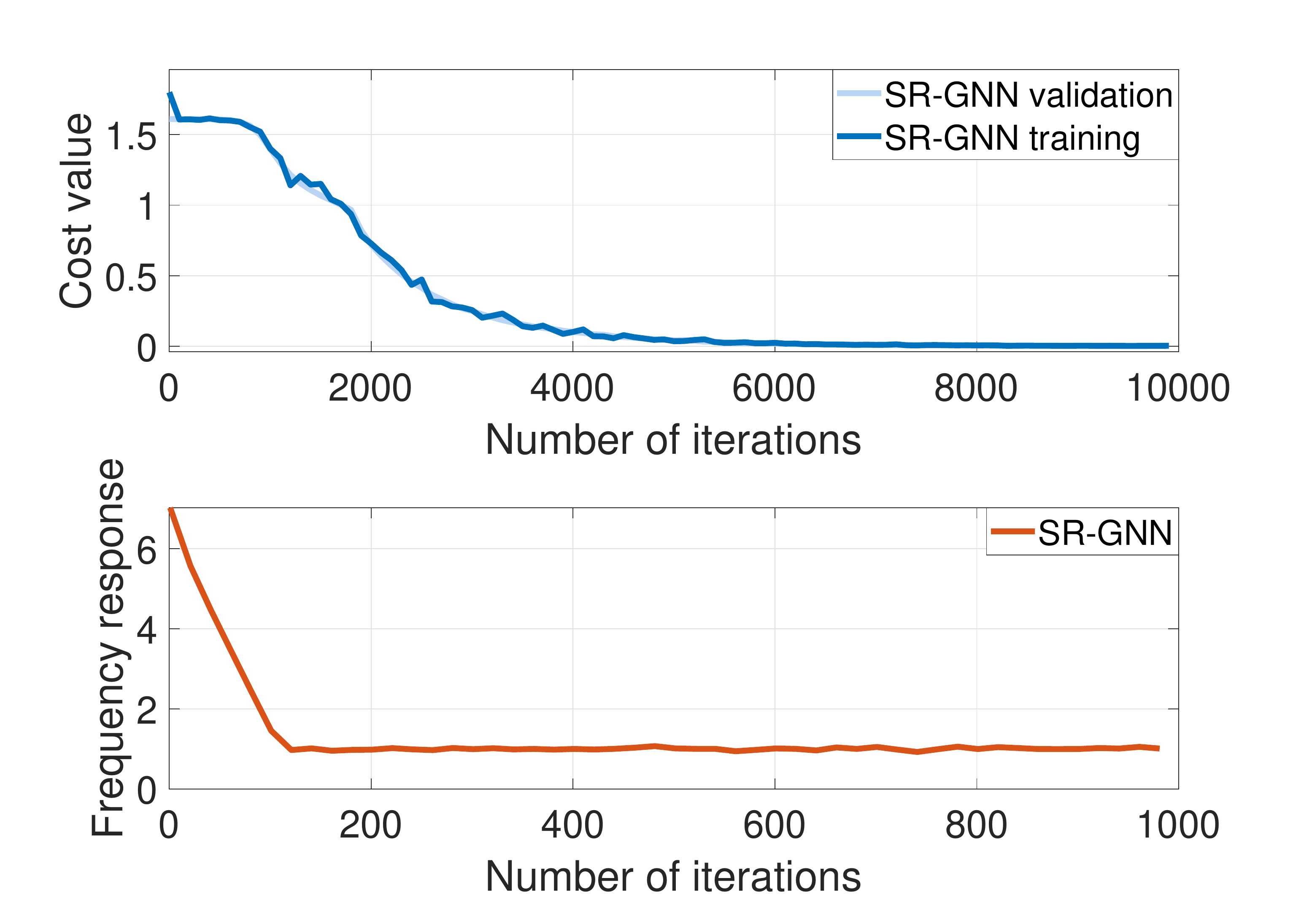}
		\caption{}
		\label{fig:convergence}
	\end{subfigure}
	\begin{subfigure}{.325\textwidth}
		\includegraphics[width=\textwidth]{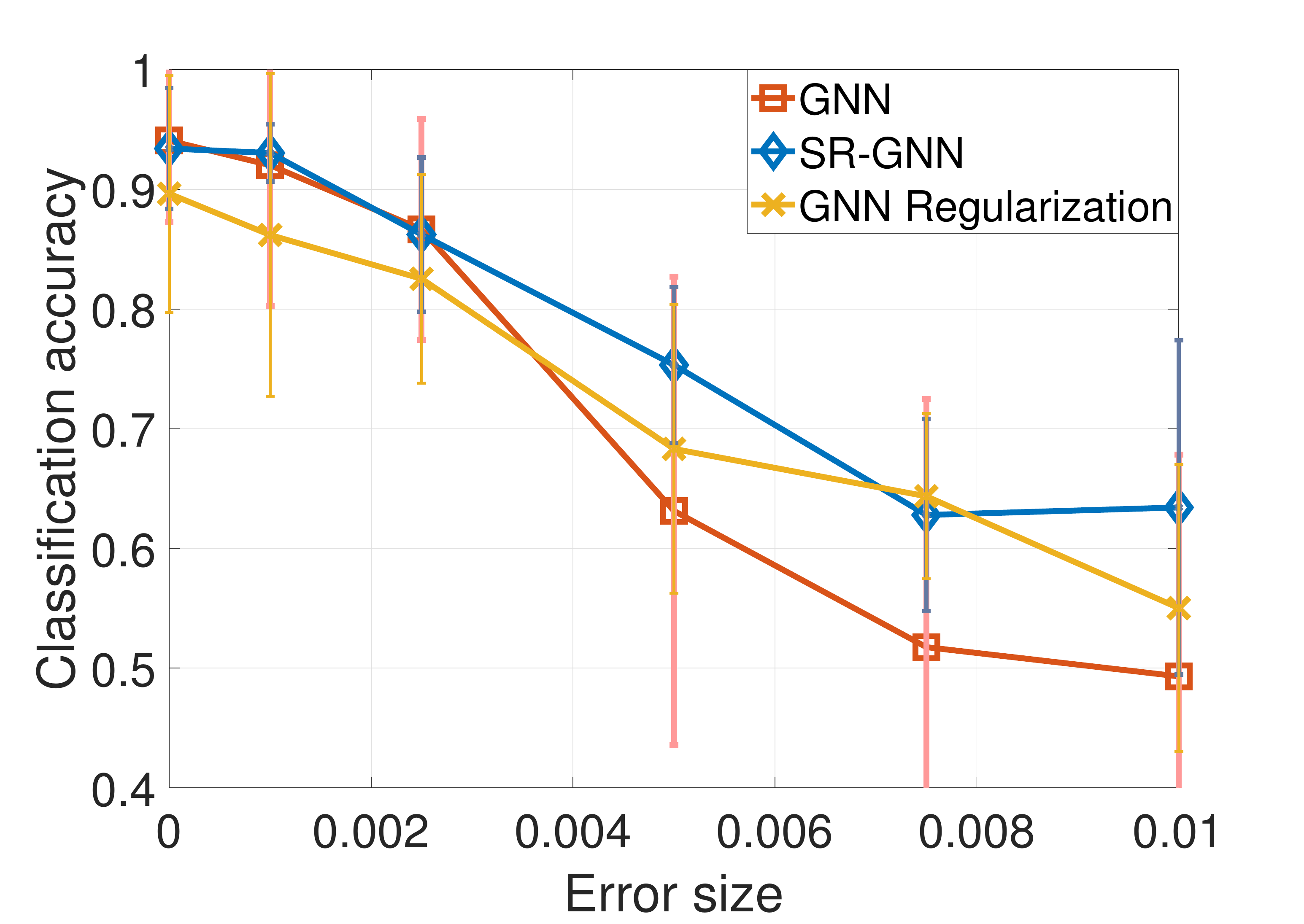}
		\caption{}
		\label{fig:source}
	\end{subfigure}
	\begin{subfigure}{.325\textwidth}
		\includegraphics[width=\textwidth]{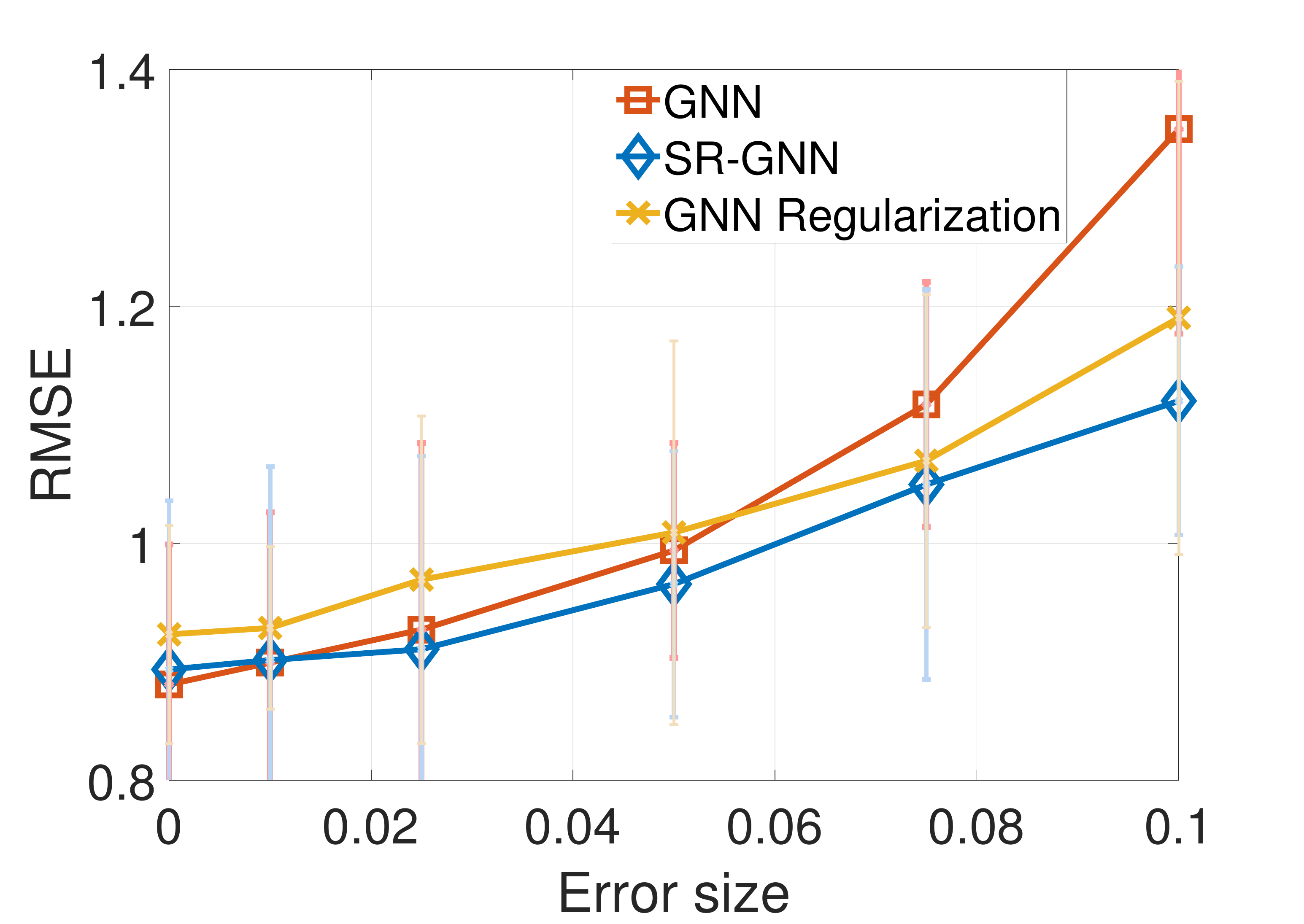}
		\caption{}
		\label{fig:movie}
	\end{subfigure}
	\caption{The performance of the SR-GNN. (a) The convergence of the SR-GNN. (b) The performance comparison in source localization. (c) The performance comparison in movie recommendation.}
	\label{fig:experiment}
\end{figure*}

\subsection{Training}\label{subsec:training}

Let $C(\cdot,\cdot)$ be the cost function and $\ccalR= \{(\bbx_r, \bby_r)\}$ the training set. The goal is to minimize the cost regularized by a term that pushes the maximal value of the spectral output $\bbz_\ell$ close to one at each layer. The cost is the task fitting term, while the regularizer balances the stability with the information propagation. Mathematically, this implies solving the optimization problem 
\begin{align}\label{eq:trainObjective}
	\min_{\ccalH} \ccalC (\ccalR, \gamma; \ccalH) = \min_{\ccalH} \frac{1}{|\ccalR|} \!\!&\sum_{(\bbx_r,\bby_r)\in \ccalR}\! \!\!\!C(\bbPhi(\bbx_r,\bbS;\ccalH), \bby_r) \\
	&+ \frac{\gamma}{L}\sum_{\ell=1}^L |1 - \max_{i=1,\ldots,n}(z_{\ell i})| \nonumber 
\end{align}
where $|\ccalR|$ is the number of data samples, $z_{\ell i}$ is the $i$th entry of the spectral output $\bbz_\ell$ at layer $\ell$ [cf. \eqref{eq:layerOutput}], and $\gamma > 0$ is the regularization parameter that weighs the deviation of the maximal value of the spectral output from one averaged over $L$ layers. The regularizer purses a trade-off between the stability to topological perturbations and the information propagation through multiple layers.

We solve problem \eqref{eq:trainObjective} with stochastic gradient descent. At each iteration $t$, the architecture parameters $\ccalH_t$ are updated as
\begin{align}\label{eq:SGD}
	\ccalH_{t+1} = \ccalH_t - \alpha_t \nabla_\ccalH \ccalC(\ccalR_t, \gamma; \ccalH_t)
\end{align}
where $\ccalR_t \in \ccalR$ is a subset of $\ccalR$ randomly selected at iteration $t$. The training is stopped either after a maximum number of iterations $T$ or when a satisfactory tolerance on the gradient norm is reached.

\subsection{Permutation Equivariance}\label{subsec:permutationEquivariance}

GNNs are equivariant to permutations in the support, i.e., the node relabeling. This allows them exploiting the internal symmetries of graph signals during training and being transferable to different graphs during testing \cite{gama2020graphs}. We show next the proposed SR-GNN is permutation equivariant. 
\begin{proposition}\label{thm:permutationEquivariance}
	Consider the SR-GNN $\bbPhi(\bbx, \bbV;\bbS,\ccalH)$ with the graph signal $\bbx$, the underlying graph $\bbS$ and the parameters $\ccalH$. Let $\bbP$ be a permutation matrix, i.e.,
	\begin{align}
		\bbP \in \{0,1\}^{n \times n} : \bbP \bbI = \bbI, \bbP^\top\bbI = \bbI 
	\end{align}
	where $\bbI$ is the identity matrix. Then, it holds that
	\begin{align}
		\bbPhi(\bbP^\top\! \bbx, \bbP^\top\! \bbV; \bbP^\top \bbS \bbP, \ccalH) = \bbP^\top \bbPhi(\bbx, \bbV;\bbS,\ccalH).
	\end{align}
\end{proposition}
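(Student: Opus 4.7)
\smallskip
\noindent\textbf{Proof plan.} The SR-GNN output decomposes into two components: the task-relevant features $\{\bbx_L^f\}_{f=1}^F$ produced by the vanilla GNN path, and the spectral outputs $\{\bbz_\ell\}_{\ell=1}^L$. I would verify the claim for each component separately. The key algebraic fact underlying everything is that a permutation matrix satisfies $\bbP^\top\bbP=\bbP\bbP^\top=\bbI$, so for any integer $k\ge 0$ we have $(\bbP^\top\bbS\bbP)^k=\bbP^\top\bbS^k\bbP$. Consequently, any graph filter in the architecture obeys $\bbH_\ell^{fg}(\bbP^\top\bbS\bbP)=\bbP^\top\bbH_\ell^{fg}(\bbS)\bbP$ with the same coefficients $\ccalH$, which gives the baseline identity $\bbH_\ell^{fg}(\bbP^\top\bbS\bbP)(\bbP^\top\bbx_{\ell-1}^g)=\bbP^\top\bbH_\ell^{fg}(\bbS)\bbx_{\ell-1}^g$.

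For the task-relevant path, I would proceed by induction on the layer index $\ell$. The base case is immediate from the hypothesis that the input signal is $\bbP^\top\bbx$. For the inductive step, the aggregation in \eqref{eq:SRGNNOutput} combined with the baseline filter identity yields $\tilde{\bbu}_\ell^f=\bbP^\top\bbu_\ell^f$, and since $\sigma$ in \eqref{eq:SRGNNOutput2} is applied pointwise we have $\sigma(\bbP^\top\bbu_\ell^f)=\bbP^\top\sigma(\bbu_\ell^f)$. This closes the induction and gives $\tilde{\bbx}_L^f=\bbP^\top\bbx_L^f$ for all $f$, i.e., the task-relevant features are permutation-equivariant in the standard GNN sense.

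For the spectral path, the key observation is that conjugation by $\bbP$ preserves the spectrum: writing $\bbS=\bbV\bbLambda\bbV^\top$, we have $\bbP^\top\bbS\bbP=(\bbP^\top\bbV)\bbLambda(\bbP^\top\bbV)^\top$, so the $i$-th eigenvalue is still $\lambda_i$ and the $i$-th eigenvector becomes $\tilde{\bbv}_i=\bbP^\top\bbv_i$, consistent with the input $\bbP^\top\bbV$. Applying the filter identity to the eigenvector branch gives $\tilde{\bbz}_{\ell i}^f=\bbP^\top\bbz_{\ell i}^f$, and then the inner product with $\tilde{\bbv}_i$ in \eqref{eq:layerOutput} produces $(\bbP^\top\bbz_{\ell i}^f)^\top(\bbP^\top\bbv_i)=(\bbz_{\ell i}^f)^\top\bbP\bbP^\top\bbv_i=(\bbz_{\ell i}^f)^\top\bbv_i$, so $\tilde{z}_{\ell i}=z_{\ell i}$ and hence $\tilde{\bbz}_\ell=\bbz_\ell$. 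Because $\bbz_\ell$ is indexed by eigenvalues (which are a permutation-invariant attribute of the graph) rather than by nodes, applying $\bbP^\top$ to it acts trivially, so equivariance and invariance agree on this component. Combining both parts yields the proposition.

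The only subtlety — not really an obstacle — is bookkeeping: making explicit the distinction between node-indexed quantities, on which $\bbP^\top$ acts by permutation, and eigenvalue-indexed quantities, on which $\bbP^\top$ acts trivially, so that the compact statement $\bbPhi(\bbP^\top\bbx,\bbP^\top\bbV;\bbP^\top\bbS\bbP,\ccalH)=\bbP^\top\bbPhi(\bbx,\bbV;\bbS,\ccalH)$ is read componentwise on the output tuple.
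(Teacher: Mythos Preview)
Your proof is correct and follows the same approach as the paper: establish the filter identity $\bbH(\bbP^\top\bbS\bbP)=\bbP^\top\bbH(\bbS)\bbP$, cascade it through the layers using that the nonlinearities are pointwise, and handle the spectral branch via the permuted eigenvectors. The only nuance is that the paper additionally reorders the eigenbasis columns, writing $\hat{\bbV}=\bbP^\top\bbV\bbP$ and $\hat{\bbLambda}=\bbP^\top\bbLambda\bbP$, and thereby obtains $\hat{\bbz}=\bbP^\top\bbz$ rather than your $\tilde{\bbz}_\ell=\bbz_\ell$; your bookkeeping matches the stated input $\bbP^\top\bbV$ more literally, and the node-indexed versus eigenvalue-indexed distinction you flag is exactly this point.
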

\begin{proof}
	We start by considering the graph filter $\bbH(\bbS)$. Denote by $\hat{\bbS} = \bbP^\top \bbS \bbP$ and $\hat{\bbx}=\bbP^\top \bbx$ the concise notations. The output of the graph filter applied on $\hat{\bbx}$ over $\hat{\bbS}$ is given by
	\begin{align}
		\bbH(\hat{\bbS})\hat{\bbx} &= \sum_{k=0}^K h_k \hat{\bbS}^k \hat{\bbx} = \sum_{k=0}^K h_k (\bbP^\top \bbS \bbP)^k \bbP^\top\! \bbx\\
		&= \sum_{k=0}^K h_k \bbP^\top \bbS \bbx = \bbP^\top \sum_{k=0}^K h_k \bbS \bbx= \bbP^\top \bbH(\bbS)\bbx. \nonumber
	\end{align}
	Consider the eigendecomposition $\hat{\bbS} = \hat{\bbV} \hat{\bbLambda} \hat{\bbV}^\top$ with $\hat{\bbV} = \bbP^\top \bbV \bbP$ the eigenvectors and $\hat{\bbLambda} \!=\! \text{diag}(\hat{\lambda}_1,\ldots,\hat{\lambda}_n)\!=\! \bbP^\top \bbLambda \bbP$ the eigenvalues. The filter output applied on $\hat{\bbv}_i$ over $\hat{\bbS}$ is given by
	\begin{align}
		\hat{z}_{i} = \hat{\bbv}_i^\top \bbH(\hat{\bbS})\hat{\bbv}_i &= h(\hat{\lambda}_i),~\forall~i=1,\ldots,n.
	\end{align}
	and we have 
	\begin{align}
		\hat{\bbz} = [\hat{z}_1,\ldots,\hat{z}_n]^\top = \bbP^\top [z_1,\ldots,z_n] = \bbP^\top \bbz.
	\end{align}
	Both are permutations by $\bbP$ of the outputs of the graph filter applied on $\bbx$ and $\{\bbv_i\}_{i=1}^n$. Since the nonlinearity $\sigma(\cdot)$ and the maxout nonlinearity $\max(\cdot)$ are both pointwise, they have no effect on the permutation equivariance. Therefore, if the input to a SR-GNN layer is permuted, its output will be permuted accordingly. As such permutations cascade through multiple layers, the SR-GNN is permutation equivariant, which completes the proof.
\end{proof}
\noindent Theorem \ref{thm:permutationEquivariance} states that the outputs of the SR-GNN are independent of the graph labeling. This indicates that the SR-GNN is capable of harnessing the same information regardless of what permuted versions of the signal and the graph are processed, which yields it robust transference on signals and graphs with similar structures. In the next section, we corroborate the theoretical findings with numerical experiments.

\section{Experiments}\label{sec:experiment}

We evaluate the performance of the SR-GNN with synthetic data from source localization and real data from movie recommendation. The SR-GNN is of $L=2$ layers, each of which comprises $F=32$ graph filters of order $K=5$ and the ReLU nonlinearity. We compare the SR-GNN with the vanilla GNN and an alternative method that regularizes the frequency response without considering the information propagation, referred to as the GNN regularization \cite{gama2020stability1}. We consider the graph perturbations are entirely random during testing and the perturbation information is not available during training. The Adam optimizer is used with a learning rate $\alpha = 10^{-3}$ and decaying factors $\beta_1 = 0.9$, $\beta_2 = 0.999$ \cite{kingma2014adam}. The regularization parameter is $\gamma = 0.1$ and all results are averaged over $10$ random data splits.

\subsection{Source localization} 

The goal of this experiment is to find the source community of a diffused signal. We consider a signal diffusion process over a stochastic block model graph of $n=50$ nodes equally divided into $5$ communities, where the intra- and the inter-community edge probabilities are $0.8$ and $0.2$. The initial signal is a Kronecker delta $\bbdelta_{s} = [\delta_1,\ldots,\delta_n]^\top \in \{0,1\}^n$ with $\delta_s \ne 0$ at the source node $s \in \{s_1,\ldots,s_5\}$. The diffused signal at time instance $t$ is $\bbx^t_{s} = \bbS^t \bbdelta_s + \bbn$, where $\bbS$ is the normalized adjacency matrix and $\bbn$ is the Gaussian noise. The training set consists of $10^4$ samples generated randomly with a source node $s$ and a diffused time $t$, while the validation and testing sets contain $2.5\times 10^3$ samples respectively. We consider the underlying graph as the communication graph, which is perturbed by an error matrix $\bbE$ during testing. The performance is measured with the classification accuracy, which is the ratio of accurately classified signals to total testing signals \cite{gama2020graphs}. 

\smallskip
\noindent \textbf{Convergence.} We first corroborate the convergence of the SR-GNN training procedure. Fig. \ref{fig:convergence} shows that the cost value and the maximal value of the average spectral output $\sum_{\ell=1}^L \max_{i=1,\ldots,n} (z_{\ell,i})/L$ [cf. \eqref{eq:layerOutput}] decrease with the number of iterations, both of which achieve convergent conditions. The former converges close to zero suggesting a satisfactory learning performance. The latter approaches and stabilizes around one, which pursues a trade-off between the perturbation stability and the information propagation -- see Section \ref{subsec:training}.

\smallskip
\noindent \textbf{Performance.} We then evaluate the impact of graph perturbations on the architecture performance. Fig. \ref{fig:source} shows the classification accuracy under different error sizes $\eps \in [0, 0.01]$. For the unperturbed case $\eps = 0$, the GNN and the SR-GNN exhibit comparable performance, among which the former is slightly better. The GNN regularization performs worst because it ignores the signal propagation that results in an information loss. For the perturbed cases, the SR-GNN exhibits the best performance, which demonstrates an improved stability to graph perturbations and corroborates the theoretical analysis in Section \ref{sec:Analysis}. The GNN is more vulnerable under graph perturbations, which is emphasized with the increase of error size $\eps$. The GNN regularization performs worse than the GNN for small error sizes because of the information loss, but outperforms the GNN for large error sizes because of the improved stability. The latter highlights that the SR-GNN achieves a satisfactory balance between these two factors.

\subsection{Movie recommendation} 

The goal of this experiment is to predict the rating a user would give to a movie. We consider a subset of the MovieLens-100k dataset with $943$ users and $400$ movies \cite{Harper2016}. The underlying graph is the movie similarity graph, where the movie similarity is the Pearson correlation and ten edges with highest similarity are kept for each node (movie) \cite{Huang2018}. The graph signal is the ratings of movies given by a user, where the value is zero if that movie is not rated. We similarly consider the underlying graph may be perturbed by an error matrix $\bbE$ during testing. This could be the case where some ratings are prone to adversarial attacks or when the Pearson similarity is measured from a few common ratings (e.g., high sparsity in the user-item matrix). The performance is measured with the root mean square error (RMSE).

\smallskip
\noindent \textbf{Performance.} We measure the architecture performance under different error sizes $\eps \in [0, 0.1]$ in Fig. \ref{fig:movie}. For the unperturbed case $\eps = 0$, the GNN exhibits the best performance. The SR-GNN takes the second place but performs comparably to the GNN. The GNN regularization performs worst because of an information loss caused by the regularization. For the perturbed cases, while all the three methods degrade, the SR-GNN achieves the lowest RMSE and exhibits robustness to graph perturbations. The GNN degrades quickly with the error size $\eps$ and shows vulnerability to graph perturbations.



\section{Conclusions} \label{sec_conclusions}
We developed the self-regularized graph neural network that achieves a trade-off between the learning performance and the stability to topological perturbations. The proposed architecture considers the graph signal and the eigenvectors of the underlying graph as inputs and generates the task-relevant features and the layer frequency responses as outputs. We train the SR-GNN by solving a regularized optimization problem, where the task-relevant cost is regularized by the frequency responses. It decreases the cost to improve performance and incentivizes the maximal frequency response around one to pursue a trade-off between the perturbation stability and the information propagation. We further prove the SR-GNN preserves the permutation equivariance, which guarantees its transference to similar graph structures. Experiment results on source localization and movie recommendation corroborate the effectiveness of the SR-GNN to find a favourable balance between performance and stability.

{\small
	\bibliographystyle{IEEEtran}
	\bibliography{myIEEEabrv,biblioOp}
}

\end{document}